\newtheorem{theorem}{Theorem}
\newtheorem{lemma}{Lemma}
\newtheorem{definition}{Definition}
\newtheorem{remark}{Remark}
\newenvironment{proof}{\textit{Proof.}}{\hfill$\square$}
\newcommand{\Sout}{S_{\mathsf{out}}}
\newcommand{\Sin}{S_{\mathsf{in}}}
\newcommand{\Dout}{D_{\mathsf{out}}}
\newcommand{\Din}{D_{\mathsf{in}}}
\newcommand{\psioutprime}{\ket{\psi_{\mathsf{out}}'}}
\newcommand{\vac}{\ket{\mathsf{vac}}}
\begin{document}

\title{Simple rules for two-photon state preparation with linear optics}

\author{
\IEEEauthorblockN{Grégoire de Gliniasty \textsuperscript{\textsection}}
\IEEEauthorblockA{Quandela - Massy, France\\
LIP6 - Paris, France}
\and
\IEEEauthorblockN{Paul Bagourd}
\IEEEauthorblockA{Quandela - Massy, France\\ EPFL - Lausanne, Switzerland}
\and
\IEEEauthorblockN{Sébastien Draux}
\IEEEauthorblockA{Quandela - Massy, France}
\and
\IEEEauthorblockN{Boris Bourdoncle}
\IEEEauthorblockA{Quandela - Massy, France}
}

\maketitle
\begingroup\renewcommand\thefootnote{\textsection}\footnotetext{gregoire.de-gliniasty@quandela.com}

\begin{abstract}
Entangling photons is a critical challenge for photonic quantum information processing: entanglement is a crucial resource for quantum communication and computation but can only be performed in a probabilistic manner when using linear optics.
In this work, we leverage a two-photon state matrix representation to derive necessary and sufficient conditions on two-photon entangling operations with linear optics.
We give a characterization of the input photonic states that can be used to prepare arbitrary two-qudit states in d-rail encoding with post-selection. 
We determine how many auxiliary photons are required to prepare any two-photon state with heralding.
In addition, we present a construction for generalized post-selected n-qubit control-rotation gates $ C^{n-1}Z({\varphi})$. 
\end{abstract}

\section{Introduction}

Photonics is ubiquitous in quantum information processing~\cite{FSS2018Photonic}: it is an ideal platform for quantum communication~\cite{LCS2023Recent}, computational advantage demonstration~\cite{ZDC2020Quantum, MLA2022Quantum}, near-term computation~\cite{maring2024versatile} and fault-tolerant computation \cite{de2023spin,bartolucci2023fusion}. However, computations are limited by the intrinsic non-determinism of entangling operations with linear optics, and finding reliable linear-optical entangling operations~\cite{knill2001scheme} is an important challenge for the field. Entanglement with linear optics can be achieved via measurement-induced non-linearity, as is done with post-selected controlled gates~\cite{ralph2002linear,fiuravsek2006linear}, fusion gates \cite{browne2005resource} and heralded gates such as Knill's CNOT \cite{knill2002quantum}. Recent research investigated the success probability for these operations by several means. For instance, the optimal success probability for the preparation or the measurement of Bell pairs and for heralded controlled gates have been investigated numerically \cite{fldzhyan2021compact,olivo2018ancilla}, and adaptive strategies, which consist in changing on the fly the circuit depending on some intermediate measurements, have been introduced to increase the success probability~\cite{bartolucci2021creation,hilaire2023linear}. Determining whether some classes of multidimensional entangled photonic states can be prepared at all, starting from a given separable multi-photon state and using linear optics, is equally important \cite{erhard2020advances, paesani2021scheme}, as such resource states 
have applications in device-independent cryptography~\cite{VPB2010Closing}, quantum key distribution~\cite{ICC2017Provably} or fault-tolerant quantum computation~\cite{Campbell2014Enhanced}. 

In this article, we analyze the feasibility of two-photon entangled state preparation using static interferometers. We make use of a convenient matrix representation of the states \cite{Calsamiglia2002Generalized,Lutkenhau1999Bell,kieling2008linear}. 
First, this enables us to characterize any state by the rank of its associated matrix, which surprisingly captures well the feasibility of the state preparation. Then, it allows us to reduce the linear optical transformation to matrix multiplication. In particular, it imposes certain conditions on the coefficients of the matrices of the states for a heralded or post-selected preparation to be achievable. 

We introduce the notations and this representation in Section \ref{sec: Prel&Not}.
Using this method, we derive a result for post-selected state preparation and a result for heralded state preparation. 
In Section \ref{sec: PS}, we present Theorem \ref{Th:PS}, which states that a two-qudit state can be prepared with post-selection from an input two-photon state if and only if the rank of the input state is at least half the rank of the two-qudit state. 
Then, in Section \ref{sec: H}, we present Theorem \ref{Th:H}, which states that a two-photon state can be prepared with heralding if and only if the number of input photons is greater than or equal to the state's rank.
In addition, we present a construction for a generalized control-rotation post-selected gate in Section \ref{sec: CZs}.

\section{Preliminaries and notations} \label{sec: Prel&Not}

\subsection{Linear optics}

In the following and unless stated otherwise, $n\in\mathbb{N}$ denotes the total number of photons and $m\in\mathbb{N}$ the total number of modes, $a_i^\dagger, a_i$ are the creation and annihilation operators on mode $i$, and $\vac = \ket{0}^{\otimes m }$ is the vacuum state. 

States with $n$ photons in $m$ modes are described by the number of photons in each mode as 
\begin{equation}
\begin{array}{ll}
    \ket{n_1\ldots,n_m} = \prod\limits_{i=1}^m \frac{(a_i^\dagger)^{n_i}}{\sqrt{n_i!}} \vac, \text{ with } \sum_{i=1}^m n_i = n. 
\end{array}
\end{equation}
We call them Fock states, and we denote their Hilbert space by $\mathcal{F}_n^m$. Any state $\ket{\psi}\in \mathcal{F}_n^m$ can be described with a complex homogeneous multivariate polynomial of degree $n$, $P_{\psi}$:
\begin{equation}
    \begin{array}{ccl}
    \ket{\psi} & = &  P_\psi(a_1^\dagger,\ldots,a_m^\dagger)\vac .
    \end{array}
\end{equation}
For brevity we sometimes write $\mathbf{a^\dagger}=(a_1^\dagger,\ldots,a_m^\dagger)$, so $ \ket{\psi} = P_\psi(\mathbf{a^\dagger})\vac$.

Linear optics allows any unitary transformation on creation operators. A unitary transformation acting on $m$ modes can be described with a matrix $U\in \mathcal{SU}(m)$ acting on the creation operators $a_i^\dagger$ as 
\begin{equation}
\label{eq:unitary}
U: a^\dagger_j \mapsto U(a^\dagger_j) = \sum\limits_{i=1}^m u_{ij}a^\dagger_i .
\end{equation}
In that case the linear-optical effect on a multi-photon state is:
\begin{equation}
\begin{array}{ccl}
    \mathcal{U} : &  \ket{\psi}=\ket{n_1,\ldots,n_m} & = \prod\limits_{i=1}^m \frac{(a_i^\dagger)^{n_i}}{\sqrt{n_i!}}\vac   \\
     & \mapsto  & = \prod\limits_{i=1}^m \frac{(U(a_i^\dagger))^{n_i}}{\sqrt{n_i!}}\vac   \\
       &   & = P_\psi(U(\mathbf{a^\dagger}))\vac.
\end{array}
\end{equation}
We use the notation $\mathcal{U}$ to denote the unitary transformation on the Fock space induced by the linear-optical unitary $U$.

We can describe the coefficients of a Fock state's polynomial after any unitary transformation with the permanent of a submatrix \cite{scheel2004permanents}. For two $n$-photon Fock state  $\ket{\mathbf{k}} = \ket{k_1,\ldots,k_m}$ and $\ket{\boldsymbol\ell}=\ket{\ell_1,\ldots,\ell_m}$, 
\begin{equation}
    \label{eq:Per}
    \bra{\mathbf{k}}\mathcal{U}\ket{\boldsymbol\ell}= \frac{\text{Per}(U_{\mathbf{k},\boldsymbol\ell})}{\sqrt{k_1!\ldots k_m!\ell_1!\ldots \ell_m!}},
\end{equation}
where $U_{\mathbf{k},\boldsymbol\ell}$ is an $n\times n$ matrix obtained by repeating every row $i$ of $U$ $k_i$ times, then repeating every column $j$ of this matrix $\ell_j$ times. 

\subsection{Two-photon states}
We use a convenient representation of two-photon Fock states that relies on the fact that the state's polynomial is quadratic, and can thus be represented as a symmetric $m\times m$ matrix  $S$ \cite{Calsamiglia2002Generalized,Lutkenhau1999Bell,kieling2008linear}:
 \begin{equation}
 \begin{array}{ccl}
     P(\mathbf{a^\dagger})\vac & = & (\mathbf{a^\dagger})^{T}S\mathbf{a^\dagger} \vac \\
                               & = & \sum\limits_{i,j=1}^m S_{ij} a_i^\dagger a_j^\dagger\vac.
\end{array}
\end{equation}
We alternatively refer to a two-photon state by its matrix $S$ or by its state $\ket{\psi}$. Unitary transformations act on $S$ as: 
 \begin{equation}
 \begin{array}{ccr}
    P(U(\mathbf{a^\dagger}))\vac&=& (U(\mathbf{a^\dagger}))^{T}SU(\mathbf{a^\dagger}) \vac\\
                                &=& (\mathbf{a^\dagger})^{T}U^{T}SU\mathbf{a^\dagger} \vac.
    \end{array}
 \end{equation}
 
A useful tool for two-photon states is Takagi's factorization, which states that for any square complex and symmetric matrix $S$, there exist a unitary matrix $V$ and a diagonal non-negative matrix $D$ such that  $D = V^T S V$. The diagonal elements are the non-negative square roots of the eigenvalues of $S^TS$.

From the diagonal representation one can see that the rank of $S$ is an invariant of the state under linear optics. We define the rank of the state as the rank of its matrix $\text{rank}(S)$, which is equal to the number of non-zero diagonal elements in $D$. Moreover, the normalization of the state $S$ imposes $\text{Tr}(D^2) = \frac{1}{2}$.

\subsection{Unitary extension and auxiliary modes}
The mode transformation $A$ that we derive might not be unitary in the first place, in which case it would not be implementable with linear optics. However, embedding $A$ in a unitary matrix is always possible at the cost of adding auxiliary modes, which correspond to additional modes in which we input no photons and impose to detect no photons by post-selection. A unitary embedding with auxiliary modes does not change the transformation $A$ on the relevant modes, however it can decrease the success probability.

The following lemma will be useful for the subsequent constructions.
\begin{lemma}[Unitary extension]\label{lemma:UnitExt}
    Let $A$ be an $m_1 \times m_2$ non-zero complex matrix with maximum singular value $\sigma_1>0$. Then there exists a unitary extension 
    \begin{equation}
        \begin{array}{cc}
           U = \begin{pmatrix} \frac{1}{\sigma_1} A & * \\ * & * \end{pmatrix} \in \mathcal{SU}(N) & \text{ where } N \leq m_1+m_2 .
        \end{array} \label{eq:unitExt}
    \end{equation}
\end{lemma}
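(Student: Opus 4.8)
The plan is to build the unitary extension by normalizing $A$ so its singular values lie in $[0,1]$, and then completing it columnwise (and rowwise) using orthonormal vectors, exactly as in the standard dilation of a contraction. First I would apply the singular value decomposition $A = W \Sigma V^\dagger$, where $W$ is $m_1\times m_1$ unitary, $V$ is $m_2\times m_2$ unitary, and $\Sigma$ is the $m_1\times m_2$ matrix of singular values $\sigma_1 \geq \sigma_2 \geq \cdots \geq 0$. Then $B \coloneqq \frac{1}{\sigma_1} A = W \widetilde\Sigma V^\dagger$ where $\widetilde\Sigma = \Sigma/\sigma_1$ has all entries in $[0,1]$, and in particular $I - \widetilde\Sigma^\dagger \widetilde\Sigma$ and $I - \widetilde\Sigma \widetilde\Sigma^\dagger$ are positive semidefinite.

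The key step is the block construction. Let $r = \operatorname{rank}(A)$. Working in the SVD bases, set
\begin{equation}
    U = \begin{pmatrix} W & 0 \\ 0 & I \end{pmatrix} \begin{pmatrix} \widetilde\Sigma & -(I - \widetilde\Sigma\widetilde\Sigma^\dagger)^{1/2} \\ (I - \widetilde\Sigma^\dagger\widetilde\Sigma)^{1/2} & \widetilde\Sigma^\dagger \end{pmatrix} \begin{pmatrix} V^\dagger & 0 \\ 0 & I \end{pmatrix},
\end{equation}
and then check directly that $U^\dagger U = I$ using $\widetilde\Sigma \widetilde\Sigma^\dagger \widetilde\Sigma = \widetilde\Sigma(I)\widetilde\Sigma^\dagger$-type identities — this is the routine contraction-dilation computation. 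The subtlety is dimension counting: the middle block as written lives on $(m_1 + m_2)\times(m_1 + m_2)$, which only gives $N \leq m_1 + m_2$, matching the claim. One can trim further by noting that the off-diagonal correction blocks $(I - \widetilde\Sigma^\dagger\widetilde\Sigma)^{1/2}$ and $(I - \widetilde\Sigma\widetilde\Sigma^\dagger)^{1/2}$ have rank at most $\min(m_1,m_2) - (\text{number of singular values equal to }\sigma_1)$, so if several singular values coincide with $\sigma_1$ one needs strictly fewer than $m_1 + m_2$ dimensions, but for the stated bound this refinement is unnecessary. Finally, to land in $\mathcal{SU}(N)$ rather than merely $\mathcal{U}(N)$, I would multiply one auxiliary row (a starred row, not one touching the $\frac{1}{\sigma_1}A$ block) by a global phase $e^{-i\arg\det U}$, which does not affect the displayed block.

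The main obstacle — really the only place any care is needed — is making sure the phase adjustment to force unit determinant is applied to an auxiliary mode, so that the top-left block remains exactly $\frac{1}{\sigma_1}A$ and not a phased version of it; since $N \geq 1$ auxiliary mode always exists when $A$ is not already unitary (and if $A$ is a scalar multiple of a unitary the construction degenerates but the statement is trivial with $N = m_1 = m_2$), this is always possible. Everything else is the textbook unitary dilation of a contraction, and the singular value normalization by $\sigma_1$ is exactly what guarantees $\frac{1}{\sigma_1}A$ is a contraction so that the matrix square roots are well-defined.
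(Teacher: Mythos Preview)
Your approach is correct and essentially identical to the paper's: both normalize $A$ by its top singular value and then perform the standard Halmos dilation of the resulting contraction via the SVD and the defect operators $(I-\tilde\Sigma^\dagger\tilde\Sigma)^{1/2}$, $(I-\tilde\Sigma\tilde\Sigma^\dagger)^{1/2}$ --- the paper simply writes out the same block matrix in $3\times 3$ form after assuming $m_1\le m_2$ and splitting off the $I_{m_2-m_1}$ block. Your extra care with the determinant (phasing an auxiliary row to land in $\mathcal{SU}$ rather than merely $\mathcal{U}$) is in fact a detail the paper's own proof omits; your parenthetical about handling the degenerate case with $N=m_1=m_2$ is slightly off when $\det(A/\sigma_1)\neq 1$, but this is moot since your main $N=m_1+m_2$ construction already supplies the needed auxiliary row.
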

\begin{proof}
    Without loss of generality, we assume that $m_1\leq m_2$.
    Let $ \frac{1}{\sigma_1} A = V_1\begin{pmatrix} D\\0\end{pmatrix}V_2^\dagger$ be a singular value decomposition of $ \frac{1}{\sigma_1}A$. $D$ is an $m_1 \times m_1$ diagonal matrix with non-negative real numbers on the diagonal.
    Let us define the two $(m_1+m_2) \times (m_1+m_2) $ unitary matrices 
    $$ \begin{array}{ccc}
        U_1 = \begin{pmatrix}  V_1 & 0\\ 0 & V_2\end{pmatrix}& \text{ and }&   U_2 = \begin{pmatrix}  V_2^\dagger & 0 \\ 0 & V_1^\dagger\end{pmatrix}
    \end{array}. $$
    Consider the matrix $\sqrt{I_{m_1}\mathsmaller{-}D^2}$ (all elements of $D$ are smaller than $1$ by construction). We then have the following embedding :
    $$\begin{pmatrix} \frac{1}{\sigma_1} A & *  \\ * & * \end{pmatrix} =
                         U_1
                        \begin{pmatrix}  D & 0 & \sqrt{I_{m_1}\mathsmaller{-}D^2}\\
                                         0 & I_{m_2-m_1} & 0\\
                                        \sqrt{I_{m_1}\mathsmaller{-}D^2} & 0 & -D\end{pmatrix}   
                        U_2,     $$
which is the product of three unitaries, hence unitary itself.
\end{proof}
\section{Post-selected two-qudit state preparation} \label{sec: PS}
In this section we introduce our result that relates the feasibility of the post-selected preparation of a 2-qudit state to the rank of the input state. 
Dual-rail is widely used for qubit encoding \cite{kok2010introduction} in photonics. 
This can be generalized to qudits with d-rail encoding. An arbitrary qudit state $\ket{\psi}$ is then encoded as a single creation operator over $d$ modes:
\begin{equation}
    \ket{\psi} = \sum_{i=0}^{d-1} \alpha_{i}\ket{i} =\sum_{i=0}^{d-1} \alpha_{i} a^\dagger_i\vac.
\end{equation} 
We encode two qudits of dimension $d_1$ and $d_2$ by encoding the first qudit on the first $d_1$ modes $\{a^\dagger_0,\ldots,a^\dagger_{d_1-1}\}$ and the second qudit on the $d_2$ following modes $\{a^\dagger_{d_1},\ldots,a^\dagger_{d_1+d_2-1}\}$.
Any such state can be described by a matrix $C$ of size $d_1\times d_2$ :
\begin{align}
    \ket{\psi} = & \sum\limits_{i=0}^{d_1-1} \sum\limits_{j=0}^{d_2-1} C_{ij}\ket{i}\otimes\ket{j} \nonumber \\ 
                  = &  \sum\limits_{i=0}^{d_1-1} \sum\limits_{j=0}^{d_2-1} C_{ij} a^\dagger_i a^\dagger_{d_1+j}\vac.
\end{align}
$\ket{\psi}$ is also a two-photon state that can be described by 
\begin{equation}
    S = \frac{1}{2}\begin{pmatrix}
        0   & C \\
        C^T & 0
    \end{pmatrix}.
\end{equation}
Note that $\text{rank}(S) = 2 \text{ rank}(C)$.
Post-selection in linear optical quantum computing corresponds to conditioning an experiment on the property that ``the measured state stays in the encoding'', as with the post-selected CNOT gate \cite{ralph2002linear} or the post-selected fusion gate \cite{browne2005resource}.
This allows the non-deterministic generation of entanglement with linear optics which is impossible deterministically.

In terms of two-qudit state preparation, this translates into conditioning the experiment upon measuring any two-photon state in the computational basis
$$\{a^\dagger_ia^\dagger_{d+j} : 0 \leq i \leq d_1\mathsmaller{-}1, d_1 \leq j \leq d_1+d_2\mathsmaller{-}1 \},$$
and rejecting any other term, as those with two photons in the same qudit's mode or those with photons in auxiliary modes.
This means that any state's matrix $S$ such that $S_{ij} = C_{ij}$ for $0 \leq i \leq d_1\mathsmaller{-}1, \text{ and } d_1 \leq j \leq d_1+d_2\mathsmaller{-}1$ is sufficient, and there are no constraints on the other coefficients, but the symmetry and normalization of $S$.

In particular, by adding $a$ auxiliary modes, preparing any $d_1+d_2+a$ modes state of whose matrix is of the form 
$$ S = \frac{N}{2}\begin{pmatrix}
                            *   & C & *\\
                            C^T & * & * \\
                            *   & * & *
                     \end{pmatrix}, $$
where $N$ is a normalization factor, is sufficient for preparing a two-qudit state, $C$, in post-selection.
We will note $*$ when there is no other constraints on the matrices' blocks than  the symmetry and normalization required to describe a two-photon state.

\begin{definition}
    Given a $m$-mode two-qudit output states $\Sout$ such that
    $$ \Sout = \frac{1}{2}\begin{pmatrix} 0 & C \\C^T & 0 \end{pmatrix}$$
    and a  $m$-mode two-photon state $\Sin$,
    we say it is possible to prepare $\Sout$ with post-selection from $\Sin$ if and only if there exist a number of auxiliary mode $a\in \mathbb{N}$, a unitary matrix $U \in \mathcal{SU}(m+a)$ and a non-zero coefficient $\alpha\in ]0,1] $ such that 
    \begin{equation}
        U^T \begin{pmatrix} \Sin & 0 \\ 0 & 0 \end{pmatrix} U = 
              \begin{pmatrix} \alpha \begin{pmatrix} * & C \\C^T & * \end{pmatrix} & * \\
                                      *    &            *                  &       \end{pmatrix}.\label{eq:defPS}
    \end{equation}
\end{definition}

\begin{theorem} \label{Th:PS}
    Let $\Sout$ describe a 2-qudit state and $\Sin$ a two-photon state.
    $\Sout$ can be prepared with post-selection if and only if $  \text{rank}(\Sout) \leq 2 \ \text{rank}(\Sin)$.
\end{theorem}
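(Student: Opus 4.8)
The plan is to prove the two directions separately, with the "only if" direction following from a rank inequality for matrix products and the "if" direction being an explicit construction.

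For the \emph{necessity} direction, suppose $\Sout$ can be prepared with post-selection from $\Sin$. By the definition, there is a unitary $U \in \mathcal{SU}(m+a)$ with $U^T \left(\begin{smallmatrix} \Sin & 0 \\ 0 & 0 \end{smallmatrix}\right) U$ having the block $\alpha\left(\begin{smallmatrix} * & C \\ C^T & * \end{smallmatrix}\right)$ in its top-left corner. The matrix $\left(\begin{smallmatrix} \Sin & 0 \\ 0 & 0 \end{smallmatrix}\right)$ has rank $\operatorname{rank}(\Sin)$, and conjugation by a unitary preserves rank, so $U^T \left(\begin{smallmatrix} \Sin & 0 \\ 0 & 0 \end{smallmatrix}\right) U$ has rank $\operatorname{rank}(\Sin)$ as well. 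A principal submatrix of a matrix $M$ has the form $Q^T M Q$ for $Q$ a coordinate projection, so its rank is at most $\operatorname{rank}(M)$; hence the top-left $m \times m$ block $\alpha\left(\begin{smallmatrix} * & C \\ C^T & * \end{smallmatrix}\right)$ has rank at most $\operatorname{rank}(\Sin)$. It remains to observe that any matrix of block form $\left(\begin{smallmatrix} * & C \\ C^T & * \end{smallmatrix}\right)$ has rank at least $\operatorname{rank}(C)$ — this follows because the off-diagonal block $C$ is a submatrix of the whole, so its rank is a lower bound. Combining, $\operatorname{rank}(C) \le \operatorname{rank}(\Sin)$, and since $\operatorname{rank}(\Sout) = 2\operatorname{rank}(C)$, we get $\operatorname{rank}(\Sout) \le 2\operatorname{rank}(\Sin)$.

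For the \emph{sufficiency} direction, assume $\operatorname{rank}(\Sout) \le 2\operatorname{rank}(\Sin)$, i.e. $\operatorname{rank}(C) \le \operatorname{rank}(\Sin) =: r$. Using Takagi's factorization, write $\Sin = V \Din V^T$ with $V$ unitary and $\Din$ diagonal non-negative with exactly $r$ nonzero entries; so up to a fixed linear-optical unitary it suffices to prepare $\Sout$ from the diagonal state $\Din$, which we may take to be $\operatorname{diag}(\lambda_1,\dots,\lambda_r,0,\dots)$. The strategy is then to exhibit a (generally non-unitary) mode transformation $A$ such that $A^T \Din A$ has the block $\left(\begin{smallmatrix} * & C \\ C^T & * \end{smallmatrix}\right)$ in the appropriate corner, and then invoke \Cref{lemma:UnitExt} to embed $A$ into a genuine unitary at the cost of auxiliary modes and a success-probability factor $\alpha = 1/\sigma_1^2 \in\ ]0,1]$. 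Concretely, since $\operatorname{rank}(C) \le r$, factor $C = P Q^T$ where $P$ is $d_1 \times r$ and $Q$ is $d_2 \times r$; the $r$ available "diagonal resources" in $\Din$ can be routed so that $a_i^\dagger$ contributions land on the first qudit's modes and the partner contributions land on the second qudit's modes, realizing exactly the bilinear form $(\mathbf{a}^\dagger)^T \left(\begin{smallmatrix} 0 & C \\ C^T & 0 \end{smallmatrix}\right)\mathbf{a}^\dagger$ on the computational modes. One clean way to see this: choose $A$ so that $A^T \Din A = \left(\begin{smallmatrix} 0 & C \\ C^T & 0 \end{smallmatrix}\right)$ exactly when $d_1 = d_2 = r$ and $C$ full rank (a direct computation with $2r \times 2r$ matrices), and handle the general case by padding with zero rows/columns and absorbing the rank deficiency.

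The main obstacle I expect is the explicit construction in the sufficiency direction: producing a mode transformation $A$ that both realizes the desired off-diagonal block $C$ from only $r = \operatorname{rank}(\Sin)$ nonzero diagonal weights and has no spurious constraints, then checking that the unitary extension of \Cref{lemma:UnitExt} does not disturb the $C$ block (it does not, since the extension leaves the embedded scaled copy of $A$ untouched on the relevant rows and columns, and the extra modes are exactly the post-selected auxiliary modes). A subtlety worth stating carefully is that the $*$ blocks are genuinely free — the definition only constrains the $C$ block and the normalization — so it is enough that $A^T\Din A$ agrees with $\left(\begin{smallmatrix} * & C \\ C^T & * \end{smallmatrix}\right)$ on the off-diagonal block, which gives considerable freedom in choosing $A$.
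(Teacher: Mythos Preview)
Your necessity argument is essentially identical to the paper's: rank is preserved under unitary conjugation, and the off-diagonal block $C$ is a submatrix, so $\operatorname{rank}(C)\le\operatorname{rank}(\Sin)$.

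For sufficiency, your overall strategy (diagonalize $\Sin$ via Takagi, build a non-unitary $A$ hitting the required off-diagonal block, then invoke \Cref{lemma:UnitExt}) is sound and close in spirit to the paper. However, the specific ``clean way'' you propose is impossible as stated. You claim one can choose $A$ with
\[
A^T \Din A \;=\; \begin{pmatrix} 0 & C \\ C^T & 0 \end{pmatrix}
\]
when $d_1=d_2=r$ and $C$ has full rank $r$. But then $m=2r$, $\Din$ has rank $r$, and the right-hand side has rank $2r$; since $\operatorname{rank}(A^T\Din A)\le\operatorname{rank}(\Din)=r$, this equation can never hold. You cannot reach $\Sout$ itself (with zero diagonal blocks) from $\Din$ by any mode transformation---the whole point is that post-selection lets you discard the diagonal blocks afterward.

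The paper handles this by first constructing an explicit symmetric target
\[
S_{ps}=\tfrac{N}{2}\begin{pmatrix} A & C \\ C^T & B \end{pmatrix}
\]
with \emph{nonzero} diagonal blocks $A,B$ chosen (via the SVD of $C$) so that $\operatorname{rank}(S_{ps})=\operatorname{rank}(C)\le r$. Then Takagi is applied to both $\Sin$ and $S_{ps}$, and a diagonal rescaling $\Lambda$ maps one diagonal form to the other because the target rank does not exceed $r$. Your plan is easily repaired along the same lines: with $C=PQ^T$ ($P\in\mathbb{C}^{d_1\times r}$, $Q\in\mathbb{C}^{d_2\times r}$), take the first $r$ rows of $A$ to be $\bigl(P^T \;\big|\; \widetilde\Din^{-1}Q^T\bigr)$, which gives the correct off-diagonal block $C$ while leaving the diagonal blocks nonzero---and those are exactly the $*$ blocks you correctly note are unconstrained. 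So the fix is to drop the claim of zero diagonal blocks and carry out the routing you sketch with the $*$ freedom actually used.
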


\begin{proof} We remind that with two-qudit of dimension $d_1$ and $d_2$, $C$ is a $d_1\times d_2$ matrix.
We assume without loss of generality that $d_1>d_2$.
We have that $\text{rank}(\Sout) = 2\ \text{rank}(C)$.

First, we suppose that $\Sout$ can indeed be prepared in post-selection.
We compare the rank of the matrices involved in \cref{eq:defPS}.
The matrices $U$ and $U^T$ are unitaries, so they preserve the rank.
Leveraging that the rank of a matrix is greater than the rank of any of its sub-matrices.
\begin{align*}
        \text{rank}(\Sin) & = \text{rank}(U^T\Sin U) \\
                         & \geq \text{rank}(C)\\
                         & \geq \text{rank}(\Sout)/2.
\end{align*}

Hence, 
$$ \text{rank}(\Sout) \leq 2\text{rank}(\Sin) .$$

Now suppose that $\text{rank}(\Sout) \leq 2\text{rank}(\Sin)$. We are going to exhibit a matrix $U$ that reproduces \cref{eq:defPS}.
Consider the following state : $$S_{ps} = \frac{N}{2}\begin{pmatrix}
        A   & C \\
        C^T & B
    \end{pmatrix},$$
for $A$ and $B$ symmetric matrices of the appropriate size and $N$ a positive normalization factor.
We show below that there exist symmetric matrices $A$ and $B$ such that $\text{rank}(S_{ps}) = \text{rank}(C)$.

Let $C = V_1\begin{pmatrix}D\\0\end{pmatrix}V_2^\dagger$ be a singular value decomposition of $C$.
$D$ is a non-negative diagonal matrix.

We set both $A$ and $B$ as
$$ \begin{array}{cc}
    A = V_1 \begin{pmatrix} D & 0\\ 0 & 0\end{pmatrix}V_1^T & B = (V_2^T)^\dagger D V_2^\dagger
\end{array}.$$
They are both symmetric by construction, hence $S_{ps}$ is symmetric and can represent a two-photon state.

Moreover, we have the following relation : 
$$ \begin{pmatrix} V_1^\dagger & 0 \\ 0 & (V_2^\dagger)^T \end{pmatrix}
  \begin{pmatrix} A & C \\ C^T & B \end{pmatrix}
  \begin{pmatrix} (V_1)^T & 0 \\ 0 & V_2^\dagger\end{pmatrix} 
  = \begin{pmatrix} D & 0 & D \\ 0 & 0 & 0 \\ D & 0 & D \end{pmatrix}
.$$
Hence there exist matrices symmetric matrices $A$ and $B$ such that $\text{rank}(S_{ps}) =\text{rank}(D) =\text{rank}(C)$.

Let $\Din$ and $D_{ps}$ be the non-negative diagonal matrices obtained from Takagi's factorization of matrices $\Sin$ and $S_{ps}$ such that for complex unitaries $U_i$ and $U_{ps}$: 
\begin{equation}
    \begin{array}{ccc}
         \Din = U_i^T \Sin U_i & \text{ and } & S_{ps} = U_{ps}^T D_{ps} U_{ps}.   
    \end{array}
\end{equation}
There exists a rescaling diagonal non-negative matrix $\Lambda$ such that $D_{ps} = \Lambda^T \Din \Lambda$ if and only if $$\text{rank}(D_{ps})\geq\text{rank}(D_{i}),$$
which is the case by assumption.

Hence, $S_{ps}= (U_i\Lambda U_{ps})^T \Sin (U_{i}\Lambda U_{ps}).$ 
This transformation is not necessarily unitary because of $\Lambda$, however with a unitary extension, this transformation is possible with linear optics at the cost of auxiliary modes. 
With the help of the unitary extension of \cref{lemma:UnitExt}, there exist $\alpha \in \mathbb{R}$ and $U\in\mathcal{SU}(N)$ with $N\leq 2(m_1+m_2)$, given by the unitary extension of $(U_{i}\Lambda U_{ps})$,  verifying \cref{eq:defPS}. 
\end{proof}

\begin{remark}
    The construction of the unitary matrices that allow the preparation in post-selection is detailed in the proof.
    The probability of success of such transformation can be derived from the unitary transformation.   
\end{remark}
\begin{remark}
    This theorem advocates for the use of qubit instead of qudits in linear optics.
    One can see that Bell pairs in qudits have rank $2d$ and two single photons have rank 2.
    When working in qubits, one can build a post-selected Bell pair with two single photons, as it take a rank-$2$ to a rank-$4$ state, verifying the $\times 2$ condition of \cref{Th:PS}.
    However for higher dimension qudits, $d>2$, the post-selected entangling gate would have to take a rank-$2$ state to a rank $>4$, which is not possible according to \cref{Th:PS}.
    One would need to prepare exactly a Bell pair in dimension $d$ in order to prepare in post-selection a Bell pair in dimension $2d$.
\end{remark}
  
\section{Heralded two-photon state preparation} \label{sec: H}
In this section we introduce our result that relates the feasibility of the heralded preparation of a two photon-state to the number of input photons.

\begin{definition} \label{def:H}
    Let $\Sout$ be a two-photon state defined over $m$ modes.
    We say that it is possible to prepare $\Sout$ with heralding from $n$ single photons if and only if there exist a number of auxiliary heralding modes $h\in\mathbb{N}$, a heralding signal $ \mathbf{s} = (s_1,\ldots, s_h) \in \mathbb{N}^{h}$ such that $\sum s_i = n\mathsmaller{-}2$, a number of auxiliary vacuum modes $a\in\mathbb{N} $ such that $n \leq m+h+a$, a success probability $p_s\in ]0,1]$ and a matrix $U\in\mathcal{SU}(m+h+a)$ verifying :
    \begin{equation}
    \begin{array}{c}
    \left({}_a\bra{0,\ldots}{}_h\bra{ \mathbf{s}}\otimes I_{m}\right) \mathcal{U}\ket{1,\ldots}_n\ket{0,\ldots}_{m+h+a-n} \\
    = (\mathbf{a^\dagger})^T\left( \sqrt{p_s}\Sout\right)\mathbf{a^\dagger}\vac.
    \end{array}
        \label{eq:defH}
    \end{equation}
\end{definition}

\begin{theorem}\label{Th:H}
    Let $\Sout$ be a two-photon state over $m$ modes.
    $\Sout$ can be prepared from $n$ single photons with heralding if and only if $n \geq \text{rank}(\Sout)$.
\end{theorem}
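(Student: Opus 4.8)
The plan is to use, in both directions, the same device as in \cref{Th:PS}: after the linear-optical unitary $\mathcal U$, the input $\ket{1,\dots,1}_n\vac$ becomes a product $\prod_{i=1}^n L_i\vac$ of $n$ linear forms $L_i:=\mathcal U(a_i^\dagger)$ in the creation operators, and the heralding measurement selects one controlled piece of the expansion of this product.

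For necessity I would decompose each $L_i=L_i^{\mathsf{out}}+L_i^{\mathsf{her}}+L_i^{\mathsf{vac}}$ along the output, heralding and auxiliary-vacuum modes. Since $\sum_l s_l=n-2$ and the auxiliary modes must end up empty, the terms of $\prod_i L_i$ matching the heralding pattern are exactly those in which two factors contribute their output part, $n-2$ contribute their heralding part, and none contributes a vacuum part; hence the heralded output state is $\sum_{p<q}\gamma_{pq}\,L_p^{\mathsf{out}}L_q^{\mathsf{out}}\vac$ with scalars $\gamma_{pq}=\bra{\mathbf s}\prod_{i\ne p,q}L_i^{\mathsf{her}}\vac$. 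Writing $\mathbf v_p$ for the coefficient vector of $L_p^{\mathsf{out}}$ and $W=[\mathbf v_1|\cdots|\mathbf v_n]$, \cref{eq:defH} reads $\sqrt{p_s}\,\Sout=\tfrac12 W\Gamma W^{T}$ with $\Gamma$ the (zero-diagonal) symmetric matrix of the $\gamma_{pq}$; since $p_s>0$ this yields $\text{rank}(\Sout)=\text{rank}(W\Gamma W^{T})\le\text{rank}(W)\le n$.

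For sufficiency, assume $n\ge r:=\text{rank}(\Sout)$ (note $n\ge2$ is forced by the definition). By Takagi's factorisation and the freedom of mode transformations I may assume $\Sout=D=\text{diag}(d_1,\dots,d_r,0,\dots,0)$. The idea is to run the computation above backwards with the simplest possible heralding: a single shared heralding mode $g$, linear forms $L_i^{\mathsf{her}}=\gamma_i a_g^\dagger$, and heralding signal $\mathbf s=(n-2)$. Then $\gamma_{pq}=\sqrt{(n-2)!}\prod_{i\ne p,q}\gamma_i$, and after rescaling $\mathbf w_i:=\mathbf v_i/\gamma_i$ the heralded output matrix is proportional, by a positive constant, to $\sum_{p<q}(\mathbf w_p\mathbf w_q^{T}+\mathbf w_q\mathbf w_p^{T})=W_0(\mathbf 1\mathbf 1^{T}-I_n)W_0^{T}$ with $W_0=[\mathbf w_1|\cdots|\mathbf w_n]$. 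So it is enough to find $W_0$ with $W_0(\mathbf 1\mathbf 1^{T}-I_n)W_0^{T}=D$. Here I would use that $\mathbf 1\mathbf 1^{T}-I_n$ is a non-degenerate complex symmetric matrix (eigenvalues $n-1$ and $-1$), hence $\mathbf 1\mathbf 1^{T}-I_n=QQ^{T}$ for some invertible $Q$ (e.g.\ via Takagi); then $X\mapsto X(\mathbf 1\mathbf 1^{T}-I_n)X^{T}=(XQ)(XQ)^{T}$ has for image the set of all symmetric $m\times m$ matrices of rank $\le n$, which contains $D$ since $r\le n$, and any preimage $W_0$ works (an explicit choice is $\mathbf w_p=i\sqrt{d_p}e_p+\beta\sum_k i\sqrt{d_k}e_k$ with $\beta$ a root of $n(n-1)\beta^2+2(n-1)\beta+1=0$). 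Finally the mode transformation $a_i^\dagger\mapsto L_i$ need not be unitary; \cref{lemma:UnitExt} embeds it in a genuine unitary at the cost of auxiliary vacuum modes, and then heralding as prescribed yields $\sqrt{p_s}\,D$ on the output modes, with $p_s>0$ by construction and necessarily $p_s\le1$, since $p_s$ is the probability of a projective-measurement outcome.

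The necessity half is bookkeeping once the product-of-linear-forms picture is set up; the work is all in sufficiency. I expect the main obstacle to be the single algebraic point that $\mathbf 1\mathbf 1^{T}-I_n$ is non-degenerate — this is precisely what lets one heralded mode, through the per-pair weights $\gamma_{pq}\propto\prod_{i\ne p,q}\gamma_i$, reach every symmetric matrix of rank up to $n$, and it is also what forces genuinely complex $\mathbf w_i$, since no real choice can both kill the off-diagonal of $\sum_{p<q}(\mathbf w_p\mathbf w_q^{T}+\mathbf w_q\mathbf w_p^{T})$ and leave a prescribed diagonal. The remaining steps — reduction to $\Sout=D$, the photon/mode accounting, and checking that the unitary extension does not disturb the heralded output — should be routine.
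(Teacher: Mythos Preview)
Your proof is correct and follows the same approach as the paper: both recognise that the heralded output matrix factors as $W\Gamma W^{T}$ with $\Gamma$ the $n\times n$ symmetric matrix determined by the heralding data (the paper writes this as the bilinear form $\text{Per}(\cdot,\cdot,\boldsymbol\ell_{\mathbf s})$, which coincides with your zero-diagonal $\Gamma$), giving necessity immediately, and both achieve sufficiency by taking constant heralding rows so that $\Gamma\propto\mathbf 1\mathbf 1^{T}-I_n$ has full rank $n$ and then solving for $W$. Your write-up makes the matrix $\mathbf 1\mathbf 1^{T}-I_n$ and its invertibility explicit, while the paper handles the same step abstractly via Takagi factorisation of the permanent bilinear form, but the content is the same.
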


\begin{proof}
We take Takagi's factorization of $\Sout$, $\Dout = V^T \Sout V$ and we work on the diagonal state without loss of generality, as both states are equivalent by unitary transformation, here with $V \in \mathcal{SU}(m)$. 
Hence, we now aim to prepare the state 
\begin{equation}\label{eq:SoH}
    \begin{array}{ccc}
         \psioutprime & = & (\mathbf{a}^\dagger)^T V^T \Sout V \mathbf{a}^\dagger\vac.   \\
    \end{array}
\end{equation}

On the other hand, we have, from the definition, a unitary transformation $U\in\mathcal{SU}(m+h+a)$.
We can without loss of generality assume that the input photons are in the first $n$ modes, thus denoting $\ell_i$ the first $n$ elements of the rows of $U$:
$$\ell_i = (u_{i,0},\ldots,u_{i,n\mathsmaller{-}1}).$$
Moreover, for brevity, we condense the rows of the heralding modes, and we write
$$
\begin{array}{cccc}
    \text{Per}(\cdot,\cdot,{\boldsymbol \ell_{\mathbf{s}}}) = 
\text{Per}(\cdot,\cdot, & \underbrace{\ell_{m},\ldots,\ell_{m}}, & \ldots & \underbrace{\ldots,\ell_{m+h\mathsmaller{-}1}}) \\
     & s_1 \text{times} &     & s_h \text{times}
\end{array},$$
where $s_1,\ldots, s_h$ define the signal on the heralding modes.

Hence, using the definition and the unitary transformation rule on Fock states \cref{eq:Per}, we can write
\begin{equation} \label{eq:PerH}
    \begin{array}{r}
          \left({}_a\bra{0,\ldots}{}_h\bra{\mathbf{s}}\otimes I_{m}\right) \mathcal{U}\ket{1,\ldots}_n\ket{0,\ldots}_{m+h+a-n}       \\
          =  \sum\limits_{i=0}^{m-1}\sum\limits_{j=0}^{i} \frac{\text{Per}(\ell_i,\ell_i,{\boldsymbol\ell_{\mathbf{s}}})}{\sqrt{s_1!\ldots s_h!}}  \frac{a_i^\dagger a_j^\dagger}{\sqrt{2}^{\delta_{ij}}}\vac \\
          = \sum\limits_{i,j=0}^{m-1} \frac{\text{Per}(\ell_i,\ell_i,\boldsymbol\ell_{\mathbf{s}})}{2\sqrt{s_1!\ldots s_h!}}  \sqrt{2}^{\delta_{ij}}a_i^\dagger a_j^\dagger\vac.
    \end{array}
\end{equation}

First, we suppose that $\text{rank}(D)\leq n$.
Let us analyze the bi-linear form defined by $\text{Per}(\cdot,\cdot,\boldsymbol\ell_{\mathbf{s}}) : \mathbb{C}^n\times \mathbb{C}^n \mapsto \mathbb{C}$ for some rows $(\ell_m,\ldots,\ell_{m+h\mathsmaller{-}1}) \in (\mathbb{C}^n)^h$.
Consider its matrix representation denoted $F$. It is symmetric complex because of its construction from the permanent, and it is entirely determined by $\boldsymbol\ell_{\mathbf{s}}$. Takagi's factorization can be applied to it, resulting in a basis of  $\mathbb{C}^n$, $\{e_i, 0\leq i <n\}$   such that  $\text{Per}(e_i,e_j,\boldsymbol\ell_{\mathbf{s}})= \delta_{ij} F_{ii}$ for some non-negative values $F_{ii}$.
In particular, the rank of $F$ is at most $n$ because of its size, and it can be attained, for example by setting every line of  $\boldsymbol\ell_{\mathbf{s}}$ to $\sqrt{\frac{1}{n\mathsmaller{-}2}}(1,\ldots,1)$
\footnote{We easily exhibit $n$ linearly independent eigenvectors whose eigenvalues are non-zero, ensuring the rank is $n$. Take $v_n = (1,\ldots,1)$ , its eigenvalue is $m-1$. Take $v_i= (0\ldots,\underset{i, i+1}{\underbrace{1,-1}},\ldots)$ for $i<n$, their eigenvalues are $-1$.}.
We then take any $\boldsymbol\ell_{\mathbf{s}}$ such that $\text{rank}(F) =n$.

The first $n$ values of $F_{ii}$ are non-zero and since $\text{rank}(D)\leq n$ by assumption, one can define $m$ vectors, $\ell_i$ as 
$$ \begin{array}{cccc}
\ell_i& = & \sqrt{\sqrt{2s_1!\ldots s_{h}!}\frac{D_{ii}}{F_{ii}}} e_i,&\quad \forall i\in \{0,\ldots,n\mathsmaller{-}1\}\\
\ell_i & = & e_i,&\quad \forall i\in \{n,\ldots,m\mathsmaller{-}1\}
\end{array}.
 $$
This translates into  $$\text{Per}(\ell_i,\ell_j,\boldsymbol\ell_{\mathbf{s}})=\sqrt{2s_1!\ldots s_{h}!} D_{ii}\delta_{ij}.$$
Now we have well-defined rows $\ell_i$ that we embed in an unitary matrix using \cref{lemma:UnitExt}. 
There exist a number of auxiliary modes $a$ and a non-zero $\alpha$ such that we can construct an unitary $U\in\mathcal{SU}(m+h+a)$ of the form 
\begin{equation}
\begin{array}{cr}
    U = \begin{pmatrix}  \alpha A & *\\ *  & * \end{pmatrix}, & \text{  where }
    A = \begin{pmatrix}\ell_0 \\ \vdots \\ \ell_{m+h\mathsmaller{-}1}\end{pmatrix}. 
\end{array}
\end{equation}
In particular,
\begin{equation}
    \begin{array}{r}
          \left({}_a\bra{0,\ldots}{}_h\bra{\mathbf{s}}\otimes I_{m}\right) \mathcal{U}\ket{1,\ldots}_n\ket{0,\ldots}_{m+h+a\mathsmaller{-}n}       \\
          =  \sum\limits_{i=0}^{m-1}\sum\limits_{j=0}^{i} \frac{\alpha^n\text{Per}(\ell_i,\ell_j,\boldsymbol\ell_{\mathbf{s}})}{\sqrt{s_1!\ldots s_h!}}  \frac{a_i^\dagger a_j^\dagger}{\sqrt{2}^{\delta_{ij}}}\vac \\
          = \alpha^n\sum\limits_{i=0}^{m-1} D_{ii}  (a_i^\dagger)^2\vac .
    \end{array}
\end{equation}
Hence, if $\text{rank}(D)\leq n$, we can prepare the state $\Dout$ with heralding.

Now we suppose that the state $\Dout$ can be prepared with heralding.
Relating \cref{eq:SoH} and \cref{eq:PerH}, we can assert that  $\Dout$ can be prepared with heralding from $n$ single photons if and only if there exist some $a$, $h$, $\mathbf{s}$, $p_s$, $U$, according to \cref{def:H}, such that
\begin{equation*}
\forall i,j\in \{0,\ldots,m\mathsmaller{-}1\}, \ 
    \frac{\text{Per}(\ell_i,\ell_j,\boldsymbol\ell_{\mathbf{s}})\sqrt{2}^{\delta_{ij}}}{2\sqrt{s_1!\ldots s_{h}!}} = \sqrt{p_s}D_{ii}\delta_{ij}.
\end{equation*}
In particular here, 
\begin{equation*}
    \text{Per}(\ell_i,\ell_j,\boldsymbol\ell_{\mathbf{s}})= \sqrt{2s_1!\ldots s_{h}!p_s}D_{ii}\delta_{ij},
\end{equation*}
and $\text{rank}(D)\leq \text{rank}( \text{Per}(\cdot,\cdot,\boldsymbol\ell_{\mathbf{s}}))$. 
By  construction of  the bi-linear form $\text{Per}(\cdot,\cdot,\boldsymbol\ell_{\mathbf{s}}) : \mathbb{C}^n\times \mathbb{C}^n \to \mathbb{C}$ , its maximal rank is $n$. Hence $\text{rank}(D)\leq n$.
\end{proof}

\begin{remark}
    It has been conjectured that given two two-photon states $S_1$ and $S_2$, transforming $S_1$ into $S_2$ would require $\text{max}(0,\text{rank}(S_2)\mathsmaller{-}\text{rank}(S_1))$ single photons as resource \cite{kieling2008linear}.
    \Cref{Th:H} lends support to this conjecture, as we can interpret it as starting from a rank-$2$ state and transforming it into a rank-$n$ state with the help of $n\mathsmaller{-}2$ single photons.
\end{remark}
\begin{remark}
    Bell state measurements for qudits were introduced in~\cite{zhang2019quantum,luo2019quantum}, but they require auxiliary photons: for instance, for $d=3$, one additional photon is required.  
    This result can be formulated as an application of our \Cref{Th:H}, where $n\geq 3$ photons are required to obtain a state of rank $3$, followed by \Cref{Th:PS}, with a post-selection circuit that allows the non-deterministic measurement of a state of rank $2\times 3$. 
\end{remark}
\section{Controlled photonic gates} \label{sec: CZs}

In this section we propose a generalization to $n$ qubits of the post-selected controlled gates. 
The construction is similar to previous work \cite{KielingControlled}.
\definition[Generalized controlled-Z rotation]{Let $n$ be the number of qubit and $\varphi \in \left[0,2\pi\right]$. 
The n-qubit controlled gate is the unitary operation defined as 
\begin{equation}
    C^{n-1}Z({\varphi}) = Id_{2^n} + (e^{i\varphi}-1)\ket{1,\ldots,1}\bra{1,\ldots,1}.
\end{equation}
}

\subsection{Construction of the gate}
Let $a^{\dagger}_i$ and $b^{\dagger}_i$ the creation operators modes associated to the dual rail encoding of the $i^{th}$ qubit, with respectively the $b^\dagger$ for $\ket{0}$ and $a^\dagger$ for $\ket{1}$.
Let $\alpha$ be any root of $\alpha^n =  (e^{i\varphi}-1)$,  $I_n$ be the identity over $n$ modes, $J_n$ be any order $n$ cyclic permutation matrix over $n$ modes.
For simplicity, we will take the permutation that takes $i$ to $i+1 \text{ mod } n$. Also, we add a probability factor $p_s$ that will permit a unitary extension.
Consider the transformations over modes $a^\dagger_i$ and  $b^\dagger_i$,
\begin{equation}
\begin{array}{cc}
     A_n = (p_s)^{\frac{1}{2n}}(I_n + \alpha J_n) & B_n = (p_s)^{\frac{1}{2n}}I_n. 
\end{array}    
\end{equation}
For example for $n=3$ and a simple permutation,
$$ A_3 = (p_s)^{\frac{1}{6}}\begin{pmatrix} 1&\alpha&0\\0&1&\alpha\\\alpha&0&1\end{pmatrix}.$$
We can embed these transformations into a unitary matrix defined as 
\begin{equation}
     U_n = \begin{pmatrix}
    A_n & 0 & * \\
    0   &  B_n & *\\
    *  &   *   &*
\end{pmatrix},
\end{equation}
where $p_s$ is the maximum value allowed for the unitary embedding, and the vacuum auxiliary modes are noted $\mathbf{c}$.
Consider an input state in the computational basis represented with the bit string $x$ :
$$\ket{x}_L = \prod\limits_{i=1}^n (a^\dagger_i)^{x_i}(b^\dagger_i)^{1-x_i}\vac . $$

If $\ket{x} =\ket{1,\ldots, 1}_L =\prod\limits_{i=1}^n  a_i^\dagger \vac$, then there are only two terms after the unitary transformation that stays in the computational space.
Either every photon stay on the same mode, either they all cycle to the next modes with coefficient $\alpha$.
Hence, we can introduce a polynomial $ P_{x,n}(\mathbf{a},\mathbf{b},\mathbf{c})$ that has no monomial encoding a n-qubit state, such that :
$$\mathcal{U}\ket{x}_L = \sqrt{p_s}(1+\alpha^n)\ket{x} + P_{x,n}(\mathbf{a},\mathbf{b},\mathbf{c}).$$
The terms of $ P_{x,n}(\mathbf{a},\mathbf{b},\mathbf{c})$ correspond to cases where some of the pairs of modes $a_i^\dagger$ and $b_i^\dagger$ are empty.

Else, if there is at least one logical $0$ in $x$, then the second term with $\alpha^n\prod a_i^\dagger$ where every photon cycles to the next one is not possible as there is at least a missing $a_i^\dagger$. 
Then we have that 
$$\mathcal{U}\ket{x}_L = \sqrt{p_s}\ket{x}_L + P_{x,n}(\mathbf{a},\mathbf{b},\mathbf{c})\vac. $$
Hence this unitary effectively performs a $C^{n-1}Z({\varphi})$ gate with post-selection.

\subsection{Success probability}
Next we give the optimal success probability of this scheme which is the maximum value $p_s$ can be so that $U_n$ stays unitary.
With the unitary extension, we have that $p_s^{\frac{1}{2n}} =1/\sigma_{max}$, where $\sigma_{max}$ is the maximum  singular value of $I_n$ and $I_n+\alpha J_n$.
The latter is circulant and normal.
Its eigenvalues are $\{\lambda_k = 1+\alpha w^k$ for $0\leq k <n\}$, where $w = e^{i\frac{2\pi}{n}}$.
Its singular values are $\sigma_k = |\lambda_k|$:

 \begin{equation}
     \sigma_{max} =  \underset{0\leq k <n}{\text{max}}(|1+\alpha w^k|).
 \end{equation}

$\alpha$ is of the form $(2\text{sin}(\frac{\varphi}{2}))^{\frac{1}{n}}e^{i\frac{\varphi+\pi}{2n}}w^j$ for some $\{0\leq j <n\}$, hence the $w^j$ is absorbed into $w^k$ and 
$$\begin{array}{cccl}
  &\sigma_{max} & = &  \underset{0\leq k <n}{\text{max}}(|1+(2\text{sin}(\frac{\varphi}{2}))^{\frac{1}{n}}e^{i\frac{\varphi+\pi}{2n}}w^k|).\\
 \end{array}$$ 

Hence, the maximum success probability is 
$$\begin{array}{ccc}
p_s &  = & \left( \frac{1}{\underset{0\leq k <n}{\text{max}}(|1+(2\text{sin}(\frac{\varphi}{2}))^{\frac{1}{n}}e^{i\frac{\varphi+\pi}{2n}}w^k|)}\right)^{2n}\\
    & \geq  & \left| \frac{1}{1+(2\text{sin}(\frac{\varphi}{2}))^{\frac{1}{n}}}\right|^{2n}.\\
\end{array}$$ 

\begin{remark}
   The submatrix $B_n$ can be changed for another matrix with the same shape as $A_n$, possibly with a different phase to apply.
   In that case, the optical gate will add a phase to the terms with only ones or only zeros. 
   The success probability will be the lowest among the two phases. 
\end{remark}
\begin{remark}
   This construction can be similarly done for $ n$ qudits instead of qubits. 
   In that case, it will add a phase to one of the terms $\ket{x}$ where $x\in \{0,\ldots,d\mathsmaller{-}1\}^n$, and leave the others unchanged.
\end{remark}
\section{Discussion}

Using a convenient matrix representation, we derived simple rules to determine whether a two-photon state can be prepared using static linear optics.
When the preparation is feasible, our constructions permit to retrieve the corresponding interferometer's unitary, and we checked the validity of the circuits we obtained with the linear optical simulation framework \textit{Perceval} \cite{heurtel2023perceval}.
We did not study how the success probability of these circuits compare with the theoretical optima. 
This could be achieved with a deeper linear algebra analysis or with numerical simulations.
We restricted ourselves to state preparation, but the methods presented here can be used to investigate entangling gates, such as heralded 2-qudit controlled gates.
The matrix representation we used is limited to two photons, and going beyond that requires other methods \cite{paesani2021scheme}. 
As the construction we presented for a generalized post-selected multi-photon gate. 
With this construction, we retrieve existing results when $n=2,3$ \cite{kieling2008linear}, but we go beyond the state of the art in the general case \cite{fiuravsek2006linear}.

\section*{Acknowledgments}
We are grateful to Nicolas Heurtel for valuable discussions and his invaluable help in formalizing the results. We would also like to thank Rawad Mezher for his feedback.
This work has been co-funded by the European Commission as part of the EIC accelerator program under the grant agreement 190188855 for SEPOQC project, by the Horizon-CL4 program under the grant agreement 101135288 for EPIQUE project, and by the UFOQO Project financed by the French State as part of France 2030.

\bibliographystyle{IEEEtran}
\bibliography{bib}

\end{document}